\setlist{nosep, itemsep=.1cm, topsep=.1cm}
\newcommand{\be}[0]{\begin{equation}}
\newcommand{\ee}[0]{\end{equation}}
\numberwithin{equation}{section}
\theoremstyle{plain}
\newtheorem{theorem}{Theorem}[section]
\newtheorem{lemma}[theorem]{Lemma}
\theoremstyle{definition}
\begin{document}

\vspace*{-1cm}

\thispagestyle{empty}

\vspace*{1.5cm}

\begin{center}

{\LARGE 

{\bf Higher abelian Dijkgraaf-Witten theory}}

\vspace{2.0cm}

{\large Samuel Monnier}

\vspace*{0.5cm}

Institut für Mathematik, 
Universit\"at Z\"urich,\\
Winterthurerstrasse 190, 8057 Z\"urich, Switzerland\\
samuel.monnier@gmail.com\\

\vspace*{1cm}

{\bf Abstract}

\end{center}

Dijkgraaf-Witten theories are quantum field theories based on (form degree 1) gauge fields valued in finite groups. We describe their generalization based on $p$-form gauge fields valued in finite abelian groups, as field theories extended to codimension 2.\\


\noindent MSC class: 81T45\\
Keywords: topological field theory, gauge theory.



\section{Introduction and summary}


Dijkgraaf-Witten (DW) theories \cite{Dijkgraaf:1989pz} are essentially Chern-Simons theories for gauge fields valued in a finite group $\Gamma$, and can be defined in any dimension. Their fields are connections on principal $\Gamma$-bundles. Due to the finiteness of $\Gamma$, there is only one connection on each principal bundle and it is necessarily flat. As a result, the space of fields is finite, and the path integral reduces to a finite instanton sum, making their exact quantization straightforward. For this reason, they are interesting toy models of quantum gauge field theories. 

Abelian gauge fields have higher degree cousins, described locally by $p$-forms and globally by degree $p+1$ differential cohomology classes \cite{hopkins-2005-70}. When the gauge group is $U(1)$, they can be thought of as connections on certain "higher circle bundles" that can be defined using higher category theory. We describe in the present paper generalizations of abelian Dijkgraaf-Witten theories whose fields are higher gauge fields valued in a finite abelian group $\Gamma$. Just as for ordinary DW theories, the path integrals are finite sums and we can describe the quantum theories exactly. 


Two subtleties appear in the construction below. The first is about finding a good model for the higher gauge fields. We do not know a convenient higher generalization of principal bundles with connection valued in a finite group (but see \cite{Baez:2004in, Schreiber2011, Zucchini:2011aa, Jurco:2014mva} for the $p = 2$ case). However, as Dijkgraaf-Witten theory is a gauge theory, only the set of isomorphism classes of fields matters. On a manifold $M$, the isomorphism classes of higher abelian gauge fields are given by $H^p(M; \Gamma)$, and we can take the fields to be cocycles valued in $\Gamma$. Indeed, ordinary Dijkgraaf-Witten theories themselves can be reformulated in terms of 1-cocycles valued in $\Gamma$, instead of principal $\Gamma$-bundles.

The second subtlety is the determination of the measure \eqref{EqDefNormFact1} on the space of fields, which appears in the instanton sum defining the quantum theory. These factors crucially obey the relation \eqref{EqRelMeasAutGroups}, which ensures that the field theory functor is compatible with the gluing of manifolds with boundary, as we show in Section \ref{SecGlu}. Our restriction to abelian groups makes the measure constant across the space of fields, leaving only the dependence on the underlying manifold. The structure of the measure is nevertheless interesting, being given by an alternating product of orders of $\Gamma$-valued cohomology groups. It suggests an interpretation in terms of a tower of ghosts that is not made explicit in our construction. In more mathematical terms, it should coincide with the homotopy cardinality of a $p$-groupoid of fields, but we will not attempt to make this higher categorical structure manifest here.

Apart from the above, the proof of Freed and Quinn \cite{Freed:1991bn, Freed:1994ad} showing that ordinary Dijkgraaf-Witten theory defines a field theory functor generalizes easily. 

We define the higher abelian Dijkgraaf-Witten theories only as field theories extended to codimension 2, because we do not have a clear picture of the higher categorical objects assigned by the field theory functor to manifolds of codimension higher than 2. The heuristic arguments of \cite{Freed:1994ad} suggest however that there should be no problem defining these theories as fully extended field theories.

After publication, we realized that as non-extended field theories, the theories constructed here are a special case of a general construction by Quinn described in \cite{Quinn1991}. In this construction, the classifying space of fields $B\Gamma$ appearing in Dijkgraaf-Witten theory is replaced by any topolgical space with finite homotopy groups. We should also mention that closely related field theories have been constructed by \v{S}evera in \cite{Severa:2002qe}. The sketch of a general framework for finite path integration was presented in \cite{2009arXiv0905.0731F}.

It would be interesting to construct Dijkgraaf-Witten theories of higher gauge fields valued in non-abelian finite groups. The quantum theory of non-abelian higher gauge fields is unknown\footnote{See however \cite{Zucchini:2011aa} for an approach to quantization using the BV formalism.}, and the latter appear in several physically interesting theories, such as (2,0) superconformal field theories in six dimensions or gauged supergravities. One may hope that the simple setting of Dijkgraaf-Witten theory will provide interesting insights. A possible avenue is to repeat the present construction in the framework of non-abelian cohomology (see for instance \cite{giraud1971cohomologie, Toen, 2006math.....11317B}). In the context of state sum models, results have been obtained by Yetter in \cite{Yetter:1992rz} (see also \cite{2006math......8484F}) in the case $p = 2$, and by Porter in \cite{JLM:20235} for generic $p$. We will not discuss further the non-abelian case here.

The paper is organized as follows. In Section \ref{SecGamGerb}, we explain that the isomorphism classes of fields in the higher DW theories are classified by the $p$th cohomology group of the underlying manifold with value in $\Gamma$. In Section \ref{SecStruct}, we describe the structures on spacetime manifolds required to define the theory. We describe in Section \ref{SecFields} the space of fields over a manifold, paying particular attention to the case where the latter has a boundary. We define there the measure factors crucial for the definition of the theory in Section \ref{SecDef}. In Section \ref{SecGlu}, we show that the field theory functor is compatible with the gluing of manifolds.

\section{Degree $p$ $\Gamma$-valued gauge fields}

\label{SecGamGerb}

Let $\Gamma$ be a finite abelian group. We would like to construct a version of DW theory in which the fields on which the path integral is performed generalize principal $\Gamma$-bundles in the same way as $p$-form gauge fields generalize ordinary (i.e. 1-form) abelian gauge fields. While the case $p = 2$ is rather well understood \cite{Baez:2004in, Schreiber2011, Zucchini:2011aa, Jurco:2014mva}, we do not have a good picture for such objects for general $p$. However, we can make sense of their isomorphism classes as follows, which will turn out to be sufficient to formulate the DW theory. 

We remark that the isomorphism classes of principal $\Gamma$-bundles over a manifold $M$ are classified by $H^1(M;\Gamma)$, which is ultimately due to the fact that the classifying space $B\Gamma$ is an Eilenberg-MacLane space $K(\Gamma,1)$. The usual DW theory can be reformulated in terms of degree 1 $\Gamma$-valued cocycles instead of principal $\Gamma$-bundles. The precise model used for the cochains has no influence on what follows and we take singular cochains for definiteness. Of course, there is no bijection between principal $\Gamma$-bundles and $\Gamma$-valued 1-cocycles, but there is a bijection between the isomorphism classes of such objects. As the DW theory relies ultimately only on gauge invariant data, the two formulations are equivalent. \footnote{To be more precise, as pointed out by the referee, we need the groupoids of fields to be equivalent as categories. I.e. the spaces of gauge transformations (morphisms) should be in bijection as well. In the case of interest to us, this condition is trivially satisfied. It is sufficient to check that in both models, the automorphism group of a field is given by $H^0(M; \Gamma)$.} This is a concrete illustration of the fact, well-known to physicists, that a "gauge symmetry" is only a redundancy in the description of the theory, and not a property of the theory itself. 

In the cocycle formulation, the generalization to higher degree is obvious. The fields of the higher abelian DW theories are degree $p$ $\Gamma$-valued cocycles, and the equivalence classes of gauge fields are given by the degree $p$ cohomology groups valued in $\Gamma$. We now develop this picture more systematically.

Let $M$ be an closed oriented manifold. The fields on $M$ are degree $p$ $\Gamma$-valued cocycles, which we write hatted. A cocycle $\hat{P}_1$ is isomorphic to a cocycle $\hat{P}_2$ if they define the same cohomology class, i.e. if there is a degree $p-1$ cochain $\hat{\phi}$ such that $\hat{P}_2 = \hat{P}_1 + d\hat{\phi}$. As they have the same action on cocycles, we identify isomorphisms differing by the differential of a cochain, i.e. $\hat{\phi} \sim \hat{\phi} + d\hat{\rho}$. With these identifications, the automorphism group ${\rm Aut}(\hat{P})$ is $H^{p-1}(M; \Gamma)$, which is a finite group. We write $P$ for the cohomology class of $\hat{P}$.

We extend the discussion to an oriented manifold $M$ with boundary or corners, for which we need the notion of relative cocycle. Let $\hat{Q}$ be a degree $p$ $\Gamma$-valued cocycle over $\partial M$. A degree $p$ $\Gamma$-valued cocycle on $M$ relative to $\hat{Q}$ (in short a relative cocycle), is a pair $(\hat{P},\hat{\theta})$ where $\hat{P}$ is a degree $p$ $\Gamma$-valued cocycle on $M$ and $\hat{\theta}$ is a degree $p-1$ $\Gamma$-valued cochain on $\partial M$ such that $\hat{P}|_{\partial M} = \hat{Q} + d\hat{\theta}$. An isomorphism between two relative cocycles $(\hat{P}_1,\hat{\theta}_1)$ and $(\hat{P}_2,\hat{\theta}_2)$ is an equivalence class of degree $p-1$ $\Gamma$-valued cochain $\hat{\phi}$ on $M$ such that $\hat{P}_2 = \hat{P}_1 + d\hat{\phi}$ and $\hat{\theta}_2 = \hat{\theta}_1 + \hat{\phi}|_{\partial M}$. Two such cochains are equivalent if they differ by the differential of a cochain vanishing on the boundary: $\hat{\phi} \sim \hat{\phi} + d\hat{\rho}$ with $\hat{\rho}|_{\partial M} = 0$. The automorphism group ${\rm Aut}(\hat{P},\hat{\theta})$ is $H^{p-1}(M,\partial M; \Gamma)$, the relative cohomology group with value in $\Gamma$, which is a finite group. We write $(P,\theta)$ for the cohomology class of $(\hat{P},\hat{\theta})$.

We introduce the following notation. We write $F(M,\hat{Q})$ for the groupoid of degree p $\Gamma$-valued cocycles on $M$ relative to the cocycle $\hat{Q}$ on $\partial M$. When $M$ has no boundary, we simply write $F(M) = H^p(M;\Gamma)$. We will write $E(M,\hat{Q})$ for the group of equivalence classes of the groupoid $F(M,\hat{Q})$. $E(M,\hat{Q})$ is a torsor on $H^p(M,\partial M;\Gamma) = E(M,0)$.  

\section{Structures on manifolds} 

\label{SecStruct}

We consider manifolds endowed with certain unspecified geometrical/topological structures, denoted by $\mathsf{F}$ (see Appendix A.4 of \cite{Monnierd}). We assume that given a manifold $M$, possibly with boundary, $\mathsf{F}(M)$ includes an orientation on $M$ and a map $\gamma_P$ from $M$ to $K(\Gamma,p)$. Upon picking a universal choice of cocycle $\hat{U}$ representing a generator of $H^p(K(\Gamma,p);\Gamma)$, $\gamma_P$ determines an element $\hat{P}$ of $E(M,\hat{Q})$, with $\hat{Q} = (\gamma_P)^\ast(\hat{U})|_{\partial M}$. Hence $\gamma_P$ determines a gauge field on $M$. We will call such manifolds \emph{manifolds with $\mathsf{F}$-structure}, or simply \emph{$\mathsf{F}$-manifolds}. We write $\bar{\mathsf{F}}$ for the structure encoding the same data as $\mathsf{F}$, except for the map $\gamma_P$. We also assume that we are given a cocycle $\hat{c}_U$ representing a cohomology class $c_U \in H^d(K(\Gamma,p), U(1))$, that plays the role of the exponentiated action of the theory. The data $\mathsf{F}(M)$ then includes a cocycle $\hat{c} := \gamma_P^\ast \hat{c}_U \in H^d(M, U(1))$.

As explained in \cite{Freed:1991bn, Freed:1994ad}, there is a sense in which one can integrate $c$ over the $d-k$-dimensional manifold $M$. For $k = 0$, the integration map is the usual integration of cochains, yielding an element of $U(1)$. For $k = 1$, one obtains a Hermitian line, i.e. a 1-dimensional Hilbert space. For $k = 2$, one obtain a 2-Hermitian line, which is a category equivalent to the category $\mathcal{H}_1$ of finite dimensional Hilbert spaces (see for instance Appendix A.2 of \cite{Monnierd}). For higher $k$, one obtain higher analogues of Hermitian lines \cite{Freed:1994ad}. We will write $\mathcal{I}_c$ for the integration map.

$\mathcal{I}_c$ is a field theory defined on manifolds with $\mathsf{F}$-structure. Its value depends only on the homotopy class of the map $\gamma_P$. It can be seen as a classical version of the DW theory \cite{Freed:1991bn, Freed:1994ad}. More precisely, in the terminology of geometric quantization, it is the prequantum version of the DW theory determined by the exponentiated action $c_U$. The quantum DW theory $\mathcal{D\!\!W}_{c}$ is defined on manifolds with $\bar{\mathsf{F}}$-structure, via a sum of $\mathcal{I}_c$ over the space of isomorphism classes of degree $p$ $\Gamma$-valued gauge fields. This sum should be interpreted as a path integral over the field space of the theory.

In the following, all the manifolds are assumed to be $\bar{\mathsf{F}}$-manifolds, and we denote $\mathsf{F}$-manifolds by pairs $(M,\hat{P})$, where $M$ is a $\bar{\mathsf{F}}$-manifold and $\hat{P}$ is the gauge field encoded in $\mathsf{F}(M)$. Cocycles/gauge fields are always hatted and their cohomology classes/equivalence class are denoted by the same letter without a hat.

\section{Measure} 

\label{SecFields}

We now define measure factors that play a crucial role in the definition of the theory, and prove a fundamental identity they satisfy. Let
\be
\label{EqDefNormFact1}
\mu_M = \prod_{i = 0}^{p-1} \left| H^i(M,\partial M;\Gamma) \right|^{(-1)^{p-i}} \;,
\ee
where $|G|$ denotes the order of the finite group $G$. Let us furthermore define for $N \subset M$, $N \cap \partial M = \emptyset$,
\be
\label{EqDefNormFact2}
\mu_{(M,N)} = \prod_{i = 0}^{p-1} \left| H^i(M,N \cup \partial M;\Gamma) \right|^{(-1)^{p-i}} \;.
\ee
Let $K$ be the kernel of the map $H^p(M,N \cup \partial M;\Gamma) \rightarrow H^p(M,\partial M;\Gamma)$.
\begin{lemma}
The following equality holds:
\be
\label{EqRelMeasAutGroups}
\mu_M = |K| \mu_{(M,N)} \mu_{N}
\ee
\end{lemma}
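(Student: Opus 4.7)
The plan is to apply the standard fact that for a bounded exact sequence of finite abelian groups $0 \to G_1 \to \cdots \to G_n \to 0$, the alternating product $\prod_i |G_i|^{(-1)^i}$ is equal to $1$. The input exact sequence will be the long exact sequence of the triple $(M, N \cup \partial M, \partial M)$ in cohomology with coefficients in $\Gamma$:
\be
\cdots \to H^i(M, N \cup \partial M;\Gamma) \to H^i(M, \partial M;\Gamma) \to H^i(N \cup \partial M, \partial M;\Gamma) \to H^{i+1}(M, N \cup \partial M;\Gamma) \to \cdots
\ee

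First I would simplify the third term by excision. Since $N \cap \partial M = \emptyset$, the subspace $\partial M$ is a connected component of $N \cup \partial M$ disjoint from $N$, so excision (together with $\partial N = \emptyset$, as is implicit in the gluing setup where $\mu_N$ is defined by the same formula as $\mu_M$ with empty boundary) gives $H^i(N \cup \partial M, \partial M;\Gamma) \cong H^i(N;\Gamma)$. Next I would truncate the long exact sequence at degree $p$. By exactness, the kernel $K$ of $H^p(M, N \cup \partial M;\Gamma) \to H^p(M, \partial M;\Gamma)$ coincides with the image of $H^{p-1}(N;\Gamma) \to H^p(M, N \cup \partial M;\Gamma)$, so one obtains a bounded exact sequence
\be
0 \to H^0(M, N\cup\partial M;\Gamma) \to H^0(M,\partial M;\Gamma) \to H^0(N;\Gamma) \to \cdots \to H^{p-1}(N;\Gamma) \to K \to 0
\ee
consisting of $3p+1$ nontrivial terms.

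Applying the alternating-product identity to this truncated sequence produces an equation in the logs of the orders of all the groups involved, with $|K|$ appearing as the last term. The final step is bookkeeping: one must verify that the signs work out so that the contributions of $H^i(M,\partial M;\Gamma)$, $H^i(M,N\cup\partial M;\Gamma)$ and $H^i(N;\Gamma)$ reassemble into $\mu_M$, $\mu_{(M,N)}^{-1}$ and $\mu_N^{-1}$ respectively. The potential obstacle is this parity check, but it works out essentially automatically: the three groups at degree $i$ sit in consecutive positions of the exact sequence, so they acquire three consecutive signs in the alternating sum. Over a period of three positions the signs cycle through $(+,-,+)$ up to a common factor $(-1)^i$, matching exactly the exponent $(-1)^{p-i}$ appearing in the definitions of $\mu_M$, $\mu_{(M,N)}$, $\mu_N$. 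The term $K$, sitting at position $3p+1$, inherits the sign $(-1)^p$, which combined with the overall rearrangement gives $|K|$ on the right-hand side, yielding the desired identity $\mu_M = |K|\,\mu_{(M,N)}\,\mu_N$.
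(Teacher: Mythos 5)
Your proof is correct and follows essentially the same route as the paper, which likewise invokes the long exact sequence of the triple $(M, N\cup\partial M, \partial M)$ (with $H^i(N\cup\partial M,\partial M;\Gamma)\cong H^i(N;\Gamma)$) and reads off the identity from the alternating product of orders. You simply spell out the truncation at $K$ and the sign bookkeeping that the paper leaves implicit as "an immediate consequence."
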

\begin{proof}
This is an immediate consequence of the long exact sequence for relative cohomology:
\begin{align}
\label{EqLongExSeqRelCohom}
... \rightarrow H^{p-2}(N) & \rightarrow H^{p-1}(M,N \cup \partial M)  \rightarrow H^{p-1}(M,\partial M) \\
& \rightarrow H^{p-1}(N) \rightarrow H^p(M,N \cup \partial M) \rightarrow H^p(M,\partial M) \notag \;,
\end{align}
where we suppressed the argument $\Gamma$ in the cohomology groups.
\end{proof}

Remark that in the ordinary Dijkgraaf-Witten theory, $\mu_{(M,\mathscr{P})} = 1/|{\rm Aut}(\mathscr{P})|$, where $\mathscr{P}$ is a principal $\Gamma$-bundle, and ${\rm Aut}(\mathscr{P})$ is the group of automorphisms of $\mathscr{P}$ leaving $\mathscr{P}|_{\partial M}$  fixed. When $\Gamma$ is abelian, $|{\rm Aut}(\mathscr{P})| = |H^0(M,\partial M; \Gamma)|$, which is consistent with \eqref{EqDefNormFact1}.



\section{Definition of the theory}

\label{SecDef}

In the following we use the following conventions. A 0-Hilbert space is a complex number. A 1-Hilbert space is a finite dimensional Hilbert space. The category of 1-Hilbert spaces is denoted by $\mathcal{H}_1$. A 2-Hilbert space \cite{1996q.alg.....9018B} is a $\mathbb{C}$-linear category linearly equivalent to the $n$th Cartesian product of $\mathcal{H}_1$ with itself, endowed with extra structure, see also Appendix A.2 of \cite{Monnierd}. In particular, a 2-Hilbert space $H$ is endowed with a functor $(\bullet, \bullet)_H: H^{\rm op} \times H \rightarrow \mathcal{H}_1$, playing the role of the inner product. The 2-Hilbert spaces form a 2-category $\mathcal{H}_2$. $\mathcal{H}_2$ admits a dagger structure given by the complex conjugation and a symmetric monoidal structure described in Section 4.4 of \cite{1996q.alg.....9018B}.

We write $M^{d,p}$ for a generic $\bar{\mathsf{F}}$-manifold of dimension $d$ with corners of dimension $d-p$ or higher. If $\mathsf{X} = \mathsf{F}, \bar{\mathsf{F}}$, $\mathcal{B}^{d,p}_\mathsf{X}$ is the bordism category consisting of $\mathsf{X}$-manifolds of dimension $d-p$, ..., $d$ with corners of dimension $d-p$ or higher, see Appendix A.4 of \cite{Monnierd}. The bordism category has a dagger structure given by the orientation reversal of manifolds, and a symmetric monoidal structure given by the disjoint union of manifolds. 

We will define below the quantum DW theory as a 2-functor
\be
\mathcal{D\!\!W}_{c}: \mathcal{B}^{d,2}_{\bar{\mathsf{F}}} \rightarrow \mathcal{H}_2
\ee
compatible with the dagger and the monoidal structures. 

\paragraph{Prequantum theory} We rely on the fact that the prequantum DW theory 
\be
\mathcal{I}_{c}: \mathcal{B}^{d,2}_{\mathsf{F}} \rightarrow \mathcal{H}_2
\ee
is such a 2-functor \cite{Freed:1991bn, Freed:1994ad}. (See also Section 4 of \cite{Monnierd}.) For $k = 0,1,2$, the prequantum DW theory $\mathcal{I}_{c}$ associates a (one-dimensional) $k$-Hilbert space $\mathcal{I}_{c}(M^{d-k},P)$ to a closed $d-k$-dimensional $\mathsf{F}$-manifold $(M^{d-k},\hat{P})$. For $k = 0,1$, $\mathcal{I}_{c}$ associates a vector $\mathcal{I}_{c}(M^{d-k,1},P)$ in $\mathcal{I}_{c}(\partial M^{d-k,1},Q)$ to each manifold $d-k$-dimensional $\mathsf{F}$-manifold $M^{d-k,1}$ with boundary endowed with $\hat{P} \in F(M,\hat{Q})$. 

%

\paragraph{Closed $d-k$-dimensional manifolds} Here $k = 0,1,2$. We define the value of the quantum DW theory on $M^{d-k}$ by 
\be
\label{EqQDWClMan}
\mathcal{D\!\!W}_{c}(M^{d-k}) = \sum_{P \in E(M)} \mu_{M^{d-k}} \mathcal{I}_{c}(M^{d-k}, P) \;.
\ee
The sum sign should be understood as an ordinary sum when $k = 0$, as a direct sum of Hilbert spaces when $k = 1$ and as the direct sum of 2-Hilbert spaces for $k = 2$ (see Appendix A.2 in \cite{Monnierd}). The multiplication by $\mu_{M^{d-k}}$ also deserves an explanation. For $k = 0$ this is the ordinary multiplication of complex numbers by the rational number $\mu_{M^{d-k}}$. For $k = 1$, $\mu \in \mathbb{Q}_+$ and $H$ a Hilbert space, $\mu H$ is the vector space $H$, endowed with the inner product of $H$ rescaled by $\mu$: $(\bullet, \bullet)_{\mu H} = \mu (\bullet, \bullet)_{H}$. For $k = 2$, let $H$ be a 2-Hilbert space, endowed with an inner product $(\bullet, \bullet)_H$ valued in $\mathcal{H}_1$. Then $\mu H$ is the 2-vector space $H$, endowed with an inner product $(\bullet, \bullet)_{\mu H}$ defined as follows. For any $V_1, V_2 \in H$, $(V_1,V_2)_{\mu H} = \mu (V_1, V_2)_{H}$, where the multiplication on the right-hand side should be interpreted according to the $k=1$ case we described above.

\paragraph{$d-k$-dimensional manifolds with boundary} Here $k = 0,1$. We define the value of the quantum DW theory on $M^{d-k,1}$ by:
\be
\label{EqQDWManBound}
\mathcal{D\!\!W}_{c}(M^{d-k,1}) = \sum_{Q \in E(\partial M^{d-k,1})} \sum_{P \in E(M^{d-k,1}, \hat{Q})} \mu_{M^{d-k,1}} \mathcal{I}_{c}(M^{d-k,1},P) \;.
\ee
Here $\hat{Q}$ is any cocycle representing the class $Q$. The terms on the right-hand side of \eqref{EqQDWManBound} are understood as elements of $\mathcal{I}_{c}(\partial M^{d-k,1},Q)$.

Consistency requires that 
\be
\label{EqConstManBound}
\mathcal{D\!\!W}_{c}(M^{d-k,1}) \in \mathcal{D\!\!W}_{c}(\partial M^{d-k,1}) \;.
\ee
But this is immediately implied by the corresponding relation for the prequantum DW theory: $\mathcal{I}_{c}(M^{d-k,1},P) \in \mathcal{I}_{c}(\partial M^{d-k,1},P|_{\partial M^{d-k,1}})$ \cite{Freed:1991bn, Freed:1994ad}. \eqref{EqConstManBound} implies in particular that given a bordism $B^{d-k}$ between manifolds $M_1^{d-k-1}$ and $M_2^{d-k-1}$, $\mathcal{D\!\!W}_{c}(B^{d-k})$ is a homomorphism ($k = 0$) or a $\mathbb{C}$-linear functor ($k = 1$) from $\mathcal{D\!\!W}_{c}(M_1^{d-k-1})$ to $\mathcal{D\!\!W}_{c}(M_2^{d-k-1})$.

\paragraph{$d$-dimensional manifolds with corners} Let $M^{d,2}$ be a $d$-dimensional manifold with $\partial M^{d,2} = -N_1 \cup N_2$, where $\partial N_1 = \partial N_2 = -M_1 \sqcup M_2$. We define analogously to \eqref{EqQDWClMan} and \eqref{EqQDWManBound}
\be
\label{EqQDWManCor}
\mathcal{D\!\!W}_{c}(M^{d,2}) = \sum_{R \in E(-M_1 \sqcup M_2)} \sum_{Q \in E(-N_1 \cup N_2,\hat{R})} \sum_{P \in E(M^{d,2}, \hat{Q})} \mu_{M^{d,2}} \mathcal{I}_{c}(M^{d,2},P) \;.
\ee
As before, we picked arbitrary cocycle lifts $\hat{R}$ and $\hat{Q}$ of $R$ and $Q$. $F(-N_1 \cup N_2,\hat{R})$ is the groupoid of cocycles restricting to $\hat{R}$ on $-M_1 \sqcup M_2 \subset -N_1 \cup N_2$, with arrows given by the addition of exact cocycles vanishing on $-M_1 \sqcup M_2$. $E(-N_1 \cup N_2,\hat{R})$ is the corresponding set of equivalence classes.

The fact that $\mathcal{D\!\!W}_{c}(M^{d,2})$ is a 2-morphism between the 1-morphisms $\mathcal{D\!\!W}_{c}(N_1)$ and $\mathcal{D\!\!W}_{c}(N_2)$ is directly inherited from the corresponding property of the prequantum DW theory \cite{Monnierd}.

\paragraph{Higher codimension} Formulas \eqref{EqQDWClMan}, \eqref{EqQDWManBound} and \eqref{EqQDWManCor} clearly have the same structure. Given a concrete construction of the prequantum DW field theory as a fully extended field theory, for instance along the lines proposed in \cite{Fiorenza:2012ec}, the obvious generalization of these formulas should define the higher abelian DW theories as fully extended field theories. We expect the proof of the gluing law in the next section to be formally identical, see \cite{Freed:1994ad} for the case of ordinary DW theories.

\paragraph{Compatibility} The compatibility of $\mathcal{D\!\!W}_{c}$ with the dagger and monoidal structures of $\mathcal{B}^{d,2}_{\bar{\mathsf{F}}}$ and $\mathcal{H}_2$ comes from the compatibility of $\mathcal{I}_{c}$ with these structures \cite{Freed:1991bn, Freed:1994ad, Monnierd}, and the fact that $\mu(M_1 \cup M_2) = \mu(M_1) \mu(M_2)$ for $M_1$ and $M_2$ disjoint manifolds.

\section{Gluing} 

\label{SecGlu}

The compatibility of the prequantum DW theory with gluing (i.e. the compatibility of the functor $\mathcal{I}_{c}$ with the composition of morphisms in $\mathcal{B}^{d,2}_\mathsf{F}$ and $\mathcal{H}_2$) is obvious from the locality of the integral. Because of the sums involved, the compatibility with gluing is not obvious in the DW theory and we check it here. 

Let $M^{d-k,1}$ be as usual a $d-k$-dimensional $\bar{\mathsf{F}}$-manifold with boundary and let $N \subset M^{d-k,1}$ be a codimension 1 submanifold disjoint from the boundary. Let $M_N^{d-k,1}$ be the manifold $M^{d-k,1}$ cut along $N$, whose boundary is $\partial M^{d-k,1} \cup N \cup -N$. The compatibility with gluing is equivalent to the following
\begin{theorem}
\label{ThCompGlueDW}
We have:
\be
\label{EqCompGlueDW}
\mathcal{D\!\!W}_{c}(M^{d-k,1}) = {\rm Tr}_{\mathcal{D\!\!W}_{c}(N)}\left(\mathcal{D\!\!W}_{c}(M_N^{d-k,1})\right) \;,
\ee
where ${\rm Tr}$ on the right hand side denotes the contraction of
\be
\mathcal{D\!\!W}_{c}(M_N^{d-k,1}) \in \mathcal{D\!\!W}_{c}(\partial M^{d-k,1}) \otimes \mathcal{D\!\!W}_{c}(N) \otimes \left(\mathcal{D\!\!W}_{c}(N)\right)^\dagger
\ee
using the canonical pairing between $\mathcal{D\!\!W}_{c}(N)$ and its dual. 
\end{theorem}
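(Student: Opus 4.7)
The plan is to expand both sides of \eqref{EqCompGlueDW} using \eqref{EqQDWManBound}, reduce the equality summand-by-summand via the already known gluing law for the prequantum theory $\mathcal{I}_c$, and absorb the remaining combinatorial factor into the cohomological identity \eqref{EqRelMeasAutGroups}. Concretely, I would unfold the left-hand side as the double sum $\mu_{M^{d-k,1}}\sum_Q \sum_P \mathcal{I}_c(M^{d-k,1}, P)$ indexed by $Q \in E(\partial M^{d-k,1})$ and $P \in E(M^{d-k,1},\hat{Q})$; and, observing that $\partial M_N^{d-k,1} = \partial M^{d-k,1} \cup N \cup -N$ is a disjoint union, I would unfold $\mathcal{D\!\!W}_c(M_N^{d-k,1})$ via \eqref{EqQDWManBound} as a triple sum over $Q$, over $(R_1,R_2) \in E(N) \times E(-N)$, and over $P_N \in E(M_N^{d-k,1},\hat{Q}|\hat{R}_1|\hat{R}_2)$, weighted by $\mu_{M_N^{d-k,1}}$.

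The trace on the right-hand side then acts term by term. Under the prequantum dagger identification $\mathcal{I}_c(-N,R) = \overline{\mathcal{I}_c(N,R)}$, the canonical pairing between $\mathcal{D\!\!W}_c(N)$ and $\mathcal{D\!\!W}_c(N)^\dagger$ forces $R_1 = R_2 =: R$, and because the inner product on $\mathcal{D\!\!W}_c(N) = \bigoplus_R \mu_N\, \mathcal{I}_c(N,R)$ is the prequantum inner product rescaled by $\mu_N$, the pairing produces an overall factor of $\mu_N$ relative to the bare prequantum pairing on $\mathcal{I}_c(N,R) \otimes \overline{\mathcal{I}_c(N,R)}$. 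Applying the prequantum gluing law then converts the contracted $\mathcal{I}_c(M_N^{d-k,1},P_N)$ into $\mathcal{I}_c(M^{d-k,1},P)$, where $P$ is the field on $M^{d-k,1}$ obtained by regluing $P_N$ across the cut.

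What remains is to count, for fixed $Q$, the number of pairs $(R, P_N)$ descending to a given $P$. The value $R = P|_N$ is forced, and $E(M_N^{d-k,1},\hat{Q}|\hat{R}|\hat{R})$ is a torsor over $H^p(M_N^{d-k,1},\partial M_N^{d-k,1};\Gamma) \cong H^p(M^{d-k,1}, N \cup \partial M^{d-k,1};\Gamma)$, the isomorphism by excision. The descent map is equivariant for the natural restriction to $H^p(M^{d-k,1},\partial M^{d-k,1};\Gamma)$, whose kernel is the group $K$ of the lemma by \eqref{EqLongExSeqRelCohom}, so every fibre has size $|K|$. Collecting factors, the right-hand side of \eqref{EqCompGlueDW} becomes $|K|\, \mu_N\, \mu_{M_N^{d-k,1}} \sum_Q \sum_P \mathcal{I}_c(M^{d-k,1},P)$, which by \eqref{EqRelMeasAutGroups} coincides with the left-hand side.

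The main obstacle I anticipate is the trace step: extracting the $\mu_N$ factor from the rescaled inner product inside the canonical pairing, and verifying that the prequantum gluing law may be applied summand by summand inside the direct sum, particularly in the $k = 1$ case where the contraction is a composition of 2-morphisms in $\mathcal{H}_2$. Once this bookkeeping is unambiguous, the remaining arguments are routine applications of excision, \eqref{EqLongExSeqRelCohom}, and the lemma.
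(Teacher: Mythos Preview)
Your proposal is correct and follows essentially the same approach as the paper's proof: both rely on the prequantum gluing law for $\mathcal{I}_c$, the excision isomorphism $H^p(M_N,\partial M_N;\Gamma)\cong H^p(M,N\cup\partial M;\Gamma)$, the identification of the fibre of the descent map with $K$, the identity $\mu_{M_N}=\mu_{(M,N)}$, and finally \eqref{EqRelMeasAutGroups}. The only cosmetic difference is that the paper starts from the left-hand side and unfolds toward $\mathcal{D\!\!W}_c(M_N)$, whereas you start from the right-hand side and contract down to $\mathcal{D\!\!W}_c(M)$; the bookkeeping you flag around the rescaled inner product and the summand-by-summand application of the prequantum gluing is handled in the paper exactly as you anticipate.
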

Remark that the trace involves a scalar multiplication. For $k = 1$, the pairing is valued in $\mathcal{H}_1$ and the scalar multiplication is a tensor product-like operation between a Hilbert space and an element of the 2-Hilbert space $\mathcal{D\!\!W}_{c}(\partial M^{d-k,1})$, see for instance Appendix A.2 of \cite{Monnierd}. 

Our proof of the gluing relation \eqref{EqCompGlueDW} is strongly inspired by the corresponding proof in \cite{Freed:1991bn, Freed:1994ad}, valid for the usual DW theory and its extended version. In the present proof, we write $M$ for $M^{d-k,1}$ and $M_N$ for $M_N^{d-k,1}$. 
Let $\pi: M_N \rightarrow M$ be the gluing map that identifies the components $N$ and $-N$ of the boundary of $M_N$. 

\begin{proof}[Proof of Theorem \ref{ThCompGlueDW}]  We use the definition of the left-hand side to write 
\be
\label{EqProofGlueDW1}
\mathcal{D\!\!W}_{c}(M) = \mu_M \sum_{R \in E(\partial M)} \sum_{P \in E(M,\hat{R})} \mathcal{I}_{c}(M, P) \;.
\ee
We can perform the analysis term by term with respect to the first sum, so we fix $R \in E(\partial M)$. Let $E_N(M,\hat{R})$ and $E_{\rm ext}(N,\hat{R})$ be the kernel and image of the restriction map $E(M,\hat{R}) \rightarrow E(N)$, and let $Q_{\rm ext}$ be a choice of preimage for each $Q \in E_{\rm ext}(N,\hat{R})$. We decompose the sum over the classes in $E(M,\hat{R})$ as a sum over $E_N(M,\hat{R})$ and $E_{\rm ext}(N,\hat{R})$. The right-hand side of \eqref{EqProofGlueDW1}, for fixed $R$, becomes
\be
\label{EqProofGlueDW2}
\mu_{M} \sum_{Q \in E_{\rm ext}(N,\hat{R})} \sum_{P \in E_N(M,\hat{R})} \mathcal{I}_{c}(M, P + Q_{\rm ext}) \;.
\ee
We use \eqref{EqRelMeasAutGroups}, the gluing relation for $\mathcal{I}_{c}$ and the linearity of the trace to rewrite \eqref{EqProofGlueDW2} as
\be
\label{EqProofGlueDW3}
\mu_N  \mu_{(M,N)} |K| \sum_{Q \in E_{\rm ext}(N,\hat{R})} {\rm Tr}_{\mathcal{I}_{c}(N,Q)} \left( \sum_{P \in E_N(M,\hat{R})} \mathcal{I}_{c}(M_N,\pi^\ast (P + Q_{\rm ext})) \right) \;.
\ee
Let $E_{\rm ext}(N \sqcup -N,\hat{R})$ be the subset of $E(N \sqcup -N)$ consisting of equivalence classes of fields admitting an extension to $M_N$ restricting to $\hat{R}$ on $\partial M$. Let us choose an element $Q'_{\rm ext} \in E(M_N, \hat{R} + \hat{Q}')$ for each $Q' \in E(N \cup -N)$. Because the trace selects the diagonal component, we can replace the sum over $Q \in E_{\rm ext}(N)$ in \eqref{EqProofGlueDW3} by a sum over $Q' \in E_{ext}(N \cup -N)$. 

Next, excising a tubular neighborhood of $N$ in $M$, we remark that 
\be
\label{EqIsomCohomMNRelCohom}
E(M_N, 0) = H^p(M_M,\partial M_N;\Gamma) \simeq H^p(M,N \cup \partial M;\Gamma) \;,
\ee
where the latter isomorphism is given by excision. We have therefore a surjective homomorphism $E(M_N,0) \rightarrow E_N(M,0)$, and hence also (non-canonical) surjective homomorphisms $E(M_N, \hat{R} + \hat{Q}') \rightarrow E_N(M,\hat{R})$. Given \eqref{EqIsomCohomMNRelCohom}, the order of the kernel of these homomorphisms is $|K|$. This means that we can replace the sum $|K|\sum_{P \in E_N(M,\hat{R})}$ by $\sum_{P' \in E_N(M_N,\hat{R}+ \hat{Q}')}$. We obtain therefore
\be
\label{EqProofGlueDW4}
\mu_N  \mu_{(M,N)} \sum_{Q' \in E_{ext}(N \cup -N,\hat{R})} {\rm Tr}_{\mathcal{I}_{c}(N,Q')} \left( \sum_{P' \in E(M_N,\hat{R} + \hat{Q}')} \mathcal{I}_{c}(M_N,P') \right) \;.
\ee
But now we use again the linearity of the trace, the fact that $\mu_{(M,N)} = \mu_{M_N}$ and we remark that after summing over $R$, we obtain $\mathcal{D\!\!W}_{c}(M_N)$. Moreover, the trace over $\mathcal{D\!\!W}_{c}(N)$ is $\mu_N$ times the sum of the traces over $\mathcal{I}_{c}(N,Q)$, so we finally obtain \eqref{EqCompGlueDW}.
\end{proof}

This proves that $\mathcal{D\!\!W}_{c}$ is a 2-functor, hence defines a field theory.

\subsection*{Acknowledgments}

This research is supported in part by Forschungskredit FK-14-108 of the University of Zürich, SNF Grant No.200020-149150/1 and by NCCR SwissMAP, funded by the Swiss National Science Foundation.

{
\small

\providecommand{\href}[2]{#2}\begingroup\raggedright\endgroup

}

\end{document}